
%
\documentclass[11pt]{amsart}
\usepackage{amsfonts}
\usepackage{amssymb}
\usepackage{graphicx}


\def\a{\alpha}       \def\b{\beta}        
           

\def\D{{\mathbb D}}  
\def\C{{\mathbb C}}  
\def\R{{\mathbb R}}

\def\({\left(}       \def\){\right)}


\newtheorem{lem}{\sc Lemma}
\newtheorem{thm}{\sc Theorem}

\newtheorem{ex}{\sc Example}

\newtheorem{rk}{\sc Remark}
\newenvironment{pf}{\noindent{\textit{Proof. }}}{$\Box$ }

\newenvironment{dedication}
        {\begin{center}\begin{em}}
        {\par\end{em}\end{center}}
\begin{document}
\title[A harmonic maps approach to fluid flows]{A harmonic maps approach to fluid flows}

\author[O. Constantin]{Olivia Constantin}
\address{Faculty of Mathematics, University of Vienna, Oskar-Morgenstern-Platz~1, 1090 Vienna, Austria.} \email{olivia.constantin@univie.ac.at}

\author[M. J. Mart\'{\i}n]{Mar\'{\i}a J. Mart\'{\i}n}
\address{Department of Physics and Mathematics, University of Eastern Finland, P.O. Box 111, FI-80101 Joensuu, Finland.} \email{maria.martin@uef.fi}

\thanks{The first author was supported by FWF project P 24986-N25. The second author was supported in part by Academy of Finland grant 268009 and by Spanish MINECO Research Project MTM2015-65792-P}
\subjclass[2010]{76B03, 35Q31, 76M40}
\keywords{Explicit solutions, Lagrangian variables}
\maketitle
\begin{dedication}
Dedicated to Professor Peter Duren on the occasion of his 80th birthday
\end{dedication}
\begin{abstract}
We obtain a complete solution to the problem of classifying all two-dimensional ideal fluid flows with harmonic Lagrangian labelling maps; thus, we explicitly provide all solutions, with the specified structural property, to the incompressible two-dimensional Euler equations (in Lagrangian variables).
\end{abstract}

\section{Introduction}

Within the realm of fluid dynamics, the most complete description of a flow is attained within the Lagrangian framework.  Qualitative studies of fluids
are concerned with perturbations of explicitly known flows. Therefore non-trivial explicit
solutions that capture important aspects of the physical reality are of great importance \cite{Bennet}.
Several publications in the applied mathematics, engineering and physics research literature (see
 \cite{Go}, \cite{H}, \cite{HH}, \cite{IK}, and \cite{L} and references therein) exploited a remarkable feature
shared by some celebrated explicit solutions to the two-dimensional incompressible Euler equations, in
Lagrangian variables (such as Kirchhoff's elliptical vortex \cite{Kirchhoff} found in 1876, Gerstner's flow \cite{Gerstner} found in 1809
and re-discovered in 1863 by Rankine \cite{Rankine}, and the Ptolemaic vortices found in 1984 by Abrashkin and Yakubovich
\cite{A-Y}), namely that in all of them the labelling map is harmonic at all times.
\par
Recently, in \cite{A-C}, the authors
proposed a complex analysis approach aimed at classifying all such flows. While new explicit solutions were
obtained, the exhaustion of all possibilities was reduced to  an explicit nonlinear ordinary differential
system in $\mathbb{C}^4$. Solving this system in full generality proved elusive so far.
\par
We propose a different approach that provides a complete  solution to the original problem of finding  all flows with harmonic labelling maps.
Our approach is  based on ideas from
the theory of harmonic mappings,
more precisely, on the fact that
it is possible to characterize
the relationship between planar harmonic maps having the same
Jacobian --- a property of the labelling maps that is a consequence of the equation of mass conservation, expressed in
Lagrangian variables. Apart from achieving the full picture, our considerations provide an illustration of the deep links between the fields of complex analysis and fluid mechanics.

\section{The governing equations}

The Eulerian description of the two-dimensional motion of an ideal homogeneous fluid is obtained by imposing
the law of mass conservation
\begin{equation}\label{mc}
u_x+v_y=0,
\end{equation}
and Euler's equation of motion
\begin{equation}\label{eu}
\left\{\begin{array}{ccc}
u_t + uu_x+vu_y &=& -\,P_x,\\[0.2cm]
v_t + uv_x+vv_y &=& -\,P_y,
\end{array}\right.
\end{equation}
where $\big(u(t,x,y),\,v(t,x,y)\big)$ is the velocity field in the time and space variables $(t,x,y)$ and
the scalar function $P(t,x,y)$ represents the pressure. Since the reference density in hydrodynamics is $1\, g/cm^3$, we normalize the constant fluid density to 1.

The most complete flow representation is provided in (material) La\-gran\-gian coordinates,
in which one describes the motion of all fluid particles. For a given velocity field
$\big(u(t,x,y),\,v(t,x,y)\big)$, the motion of the individual
particles $\big(x(t),\,y(t)\big)$ is obtained by integrating the system of ordinary differential equations
$$\left\{\begin{array}{ccc}
x'(t) &=& u(t,x,y),\\[0.15cm]
y'(t) &=& v(t,x,y),
\end{array}\right.$$
whereas the knowledge of the particle path $t \mapsto \big(x(t),\,y(t)\big)$ provides by differentiation
with respect to $t$  the velocity field at time $t$ and at the location $\big(x(t),\,y(t)\big)$.
\par
Starting with a simply connected domain $\Omega_0$, representing the labelling domain, each
label $(a,b) \in \Omega_0$ identifies by means of the injective map
\begin{equation}\label{lab}
(a,b) \mapsto \big(x(t;a,b),\,y(t;a,b)\big)
\end{equation}
the evolution in time of a specific particle, the
fluid domain at time $t$, $\Omega(t)$, being the image of $\Omega_0$ under the map (\ref{lab}).
To write the governing equations in Lagrangian coordinates, we use the following relations:
\[
\left\{\begin{array}{ccc}
\displaystyle\frac{\partial}{\partial a} &=& x_a\,\displaystyle\frac{\partial}{\partial x}
\,+\,  y_a\,\displaystyle\frac{\partial }{\partial y},\\[0.35cm]
\displaystyle\frac{\partial}{\partial b} &=& x_b\,\displaystyle\frac{\partial }{\partial x}
\,+\,  y_b\,\displaystyle\frac{\partial}{\partial y},
\end{array}\right.
\]
and
\begin{equation}\label{sd2}
\left\{\begin{array}{ccc}
\displaystyle\frac{\partial}{\partial x} &=& \displaystyle\frac{1}{J}\,\Big(y_b\,\displaystyle\frac{\partial }{\partial a}
\,-\,  y_a\,\displaystyle\frac{\partial}{\partial b}\Big),\\[0.35cm]
\displaystyle\frac{\partial}{\partial y} &=& \displaystyle\frac{1}{J}\,\Big(x_a\,\displaystyle\frac{\partial}{\partial b}
\,-\,  x_b\,\displaystyle\frac{\partial}{\partial a}\Big),
\end{array}\right.
\end{equation}
where $J$ is the Jacobian of the transformation given by
\begin{equation}\label{j}
J = \Big|\frac{\partial(x,y)}{\partial (a,b)}\Big|=x_a\,y_b-y_a\,x_b\,\neq 0.
\end{equation}
The local injectivity of the transformation between the Eulerian and Lagrangian coordinates is expressed by
(\ref{j}). In view of (\ref{sd2}) and
\begin{equation}\label{el}
\left\{\begin{array}{ccc}
u(t,x,y) &=& \displaystyle\frac{\partial}{\partial t}\,x(t; a,b),\\[0.35cm]
v(t,x,y) &=& \displaystyle\frac{\partial}{\partial t}\,y(t; a,b),
\end{array}\right.
\end{equation}
the equation of mass conservation (\ref{mc}) takes the form
\begin{eqnarray*}
0 &=& u_x+v_y= \displaystyle\frac{y_b\,x_{at}-y_a\,x_{bt}+x_a\,y_{bt}-x_b\,y_{at}}{J}=\displaystyle\frac{J_t}{J}
\end{eqnarray*}
in Lagrangian coordinates, that is,
\begin{equation}\label{jt}
J_t=0\,.
\end{equation}
On the other hand, from (\ref{el}) we get
\begin{eqnarray*}
x_{tt} &=& u_t + u_xx_t +u_yy_t =u_t + uu_x+vu_y,\\
y_{tt} &=& v_t + uv_x+vv_y\,,
\end{eqnarray*}
so that the Euler equation (\ref{eu}), in Lagrangian variables, becomes
$$\left\{\begin{array}{ccc}
x_{tt} &=&-\,\displaystyle\frac{y_b\,P_a-y_a\,P_b}{J},\\[0.35cm]
y_{tt} &=&-\,\displaystyle\frac{x_a\,P_b-x_b\,P_a}{J}.
\end{array}\right.$$
Since the Jacobian of the above system does not vanish, due to (\ref{j}), we can solve for the gradient
of $P$ in the label space, obtaining
$$\left\{\begin{array}{ccc}
P_a &=&-\,x_a\,x_{tt}\,-\,y_a\,y_{tt},\\[0.15cm]
P_b &=&-\,x_b\,x_{tt}\,-\,y_b\,y_{tt},
\end{array}\right.$$
in $\Omega_0$. The domain  $\Omega_0$ being simply connected, the above system is equivalent to
the compatibility condition $P_{ab}=P_{ba}$, that is,
\[
x_a\,x_{btt} \,+\,y_a\,y_{btt}\,=\,x_b\,x_{att}\,+\,y_b\,y_{att},
\]
or, equivalently,
\begin{equation}\label{eul}
\Big( x_a\,x_{bt}\,+\,y_a\,y_{bt} \,-\,x_b\,x_{at}\,-\,y_b\,y_{at}\Big)_t=0.
\end{equation}
These considerations show that, in Lagrangian coordinates, the governing equations
are equivalent to (\ref{jt}) and (\ref{eul}), under the constraint that, at any instant
$t$, the map (\ref{lab}) is a global diffeomorphism from the label domain
$\Omega_0$ to the fluid domain $\Omega(t)$.

\section{Harmonic labellings}
As already mentioned in Introduction, the common structural property of the known explicit solutions to the
governing equations (Kirchhoff's vortex \cite{Kirchhoff}, Gerstner's wave \cite{Gerstner}, the Ptolemaic solutions \cite{A-Y})
is that the labelling map (\ref{lab}) is harmonic at every
fixed time $t$. Our aim is to explicitly find  all solutions having this property.
Most ideas in this section are inspired by \cite{A-C} and are included for the sake of completeness.
Since from now on our methods will rely
exclusively on complex analysis, it is convenient to adapt the notation accordingly.
Therefore we look for solutions to (\ref{jt}) and (\ref{eul}) having the form
\begin{equation}\label{an}
x(t;\,a,b)\,+\,i\,y(t;\,a,b)=F(t,z) \,+\,\overline{G(t,z)},\qquad z=a+ib,
\end{equation}
with $z \mapsto F(t,z)$ and $z \mapsto G(t,z)$ analytic in the simply connected
domain $\Omega_0 \subset \C$, at every time $t$. Recall that
$\displaystyle\frac{\partial F}{\partial \overline{z}}=0$ characterizes analyticity and that
\begin{equation}\label{dd}
\frac{\partial}{\partial a}=\frac{\partial}{\partial z}\,+\,\frac{\partial}{\partial \overline{z}},\qquad
\frac{\partial}{\partial b}=i\,\Big(\frac{\partial}{\partial z}\,-\,\frac{\partial}{\partial \overline{z}}\Big),
\qquad \displaystyle\frac{\partial \overline{f}}{\partial \overline{z}}=
\overline{\displaystyle\frac{\partial f}{\partial z}}.
\end{equation}
\par
The Jacobian of the harmonic map (\ref{an}) is $J=|F'|^2-|G'|^2$. Therefore the equation of mass
conservation (\ref{jt}) can be re-written as $\big( F'\,\overline{F'} \,-\,\overline{G'}\,G'\big)_t=0$,
that is,
\begin{equation}\label{mclh}
Re\,\big( F'_t\,\overline{F'} \,-\,G'\, \overline{G'_t}\big)=0\,.
\end{equation}

Relation (\ref{dd}) together with (\ref{an}) yield
$$\begin{array}{ccc}
x_a=\displaystyle\frac{F'+G'+ \overline{F'} + \overline{G'}}{2}\, , &\quad&
x_b=i\,\displaystyle\frac{F'+G'- \overline{F'} - \overline{G'}}{2}\,,\\[0.3cm]
y_a=\displaystyle\frac{F'-G'- \overline{F'} + \overline{G'}}{2i}\, , &\quad&
y_b=\displaystyle\frac{F'-G'+ \overline{F'} - \overline{G'}}{2}\,.
\end{array}$$
A lengthy but straightforward calculation using the above relations shows that (\ref{eul})
is equivalent to
\begin{equation}\label{euh}
Im\,\big( F'_t\,\overline{F'} \,-\,G'\, \overline{G'_t}\big)_t=0\,.
\end{equation}
In view of (\ref{mclh}) and (\ref{euh}), the governing equations reduce to the single equation
\begin{equation}\label{ge}
F'_t\,\overline{F'} \,-\,G'\, \overline{G'_t}=i\,\nu(z, \,\bar{z})\,
\end{equation}
for some real-valued function $\nu$.
\begin{rk}
Using the notation $F_0:=F(0, \cdot ),\, G_0:=G(0,\cdot)$, the equation of mass conservation expressed as $J_t=0$ means that the Jacobian of the labelling map is constant in time,
and hence it is given by $J=|F'_0|^2-|G'_0|^2$.  Since $\Omega_0$ is simply connected and $J$ does not vanish in
$\Omega_0$, we deduce that we either have $J>0$ (i.e. $F_0+\overline{G_0}$ is sense preserving) or  $J<0$ (i.e. $F_0+\overline{G_0}$ is sense reversing) throughout
$\Omega_0$. From now on we shall assume without loss of generality that  $F_0+\overline{G_0}$ is sense preserving. Indeed, if  $F_0+\overline{G_0}$ is sense reversing, we can replace the label domain by $\overline{\Omega_0}$, and, in this case, the labelling map at time $t=0$ becomes $F_0(\bar z)+\overline{G_0(\bar z)}$, which is
sense preserving.
\par\smallskip
An equivalent condition to the fact that $F_0+\overline{G_0}$ is sense preserving in the simply connected domain $\Omega_0$ is that $F_0$ is locally univalent (thus $F_0'\neq 0$ in $\Omega_0$) and the (second complex) dilatation $\omega$ of $F_0+\overline{G_0}$, defined by $\omega=G_0'/F_0'$, is an analytic map into the open unit disk.
\end{rk}
\par
Since the Jacobian of the harmonic maps $F(t,z)+\overline{G(t,z)}$, defined for $t\geq 0$ and $z\in \Omega_0$, that satisfy (\ref{ge}) is independent of time, in order to find all solutions it is natural to start by looking for a characterization of the relationship between two harmonic maps with equal Jacobians.

\section{Harmonic functions with equal Jacobian}

We now find the relationship between two harmonic maps with equal Jacobians. Our approach is inspired by the proof of \cite[Thm. 3]{Ch-D-O} (see also the considerations made in \cite{HM-Schwarzian}).

\par\smallskip
We start by proving the following lemma that might have some independent interest and will be used later on.

\begin{lem}\label{lem-log}
Let $\varphi$ and $\psi$ be two analytic functions in a simply connected domain $\Omega$, with $\varphi\not\equiv 0$.  Then
\begin{equation}\label{eq-lemlog}
|\varphi|^2=r|\psi|^2 +s
\end{equation}
in $\Omega$ for some real numbers $r$ and $s$ different from zero if and only if there exist two complex constants $c_1$ and $c_2$ with $|c_1|^2= r|c_2|^2 + s$ such that $\varphi\equiv c_1$ and $\psi\equiv c_2$.
\end{lem}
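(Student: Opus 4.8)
The plan is to reduce the identity \eqref{eq-lemlog} to a statement about analytic functions by differentiating. The key observation is that $|\varphi|^2 = \varphi\overline{\varphi}$ is real-analytic but not analytic, so applying $\partial/\partial\bar z$ kills nothing trivially; instead I would apply $\partial/\partial z$ to \eqref{eq-lemlog}, using that $\partial\overline{\varphi}/\partial z = 0$. This gives $\varphi'\,\overline{\varphi} = r\,\psi'\,\overline{\psi}$ in $\Omega$. If $\psi\equiv 0$ then \eqref{eq-lemlog} forces $|\varphi|^2\equiv s$, a nonzero constant, so $\varphi$ is a nonvanishing analytic function of constant modulus, hence constant by the open mapping theorem (or the maximum principle); then $c_1=\varphi$, $c_2=0$ works. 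So assume $\psi\not\equiv 0$ as well.

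Next I would exploit the fact that $\varphi\not\equiv 0$: the zero set of $\varphi$ is discrete, so on the open dense set $\Omega\setminus\varphi^{-1}(0)$ we may divide the relation $\varphi'\,\overline{\varphi} = r\,\psi'\,\overline{\psi}$ to obtain
\[
\frac{\varphi'}{\varphi} = r\,\frac{\psi'}{\psi}\cdot\frac{\overline{\psi}}{\overline{\varphi}}.
\]
The left-hand side is analytic (on any simply connected subdomain avoiding zeros), and $\psi'/\psi$ is analytic there; therefore $\overline{\psi}/\overline{\varphi}$ is analytic wherever it is finite and nonzero, which means $\psi/\varphi$ is \emph{anti-}analytic. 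A nonconstant function cannot be simultaneously analytic-expressible and anti-analytic unless it is constant; more carefully, from $\varphi'\,\overline{\varphi} = r\,\psi'\,\overline{\psi}$ and its conjugate $\overline{\varphi'}\,\varphi = r\,\overline{\psi'}\,\psi$, multiplying the two gives $|\varphi'|^2|\varphi|^2 = r^2 |\psi'|^2|\psi|^2$, and combining with \eqref{eq-lemlog} one can solve for $|\varphi'|^2$ in terms of $|\psi|^2$. The cleaner route: differentiate $\varphi'\overline{\varphi} = r\psi'\overline{\psi}$ again with respect to $\bar z$ to get $\varphi'\,\overline{\varphi'} = r\,\psi'\,\overline{\psi'}$, i.e. $|\varphi'|^2 = r\,|\psi'|^2$ on $\Omega$. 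Feeding this back and iterating, one finds $|\varphi^{(n)}|^2 = r\,|\psi^{(n)}|^2$ for all $n\geq 1$.

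From $|\varphi'|^2 = r|\psi'|^2$ (with $r\neq 0$) one sees that $\varphi'$ and $\psi'$ have the same zeros with the same multiplicities, so $h := \varphi'/\psi'$ extends to a nonvanishing analytic function on $\Omega$ with $|h|^2 \equiv r$; being analytic of constant modulus, $h$ is a constant $\kappa$ with $|\kappa|^2 = r$. Thus $\varphi' = \kappa\,\psi'$, hence $\varphi = \kappa\,\psi + \beta$ for some constant $\beta$. Substituting into \eqref{eq-lemlog} gives $|\kappa\psi+\beta|^2 = r|\psi|^2 + s$, i.e. $|\kappa|^2|\psi|^2 + 2\mathrm{Re}(\kappa\overline{\beta}\,\overline{\psi}) + |\beta|^2 = r|\psi|^2 + s$; since $|\kappa|^2 = r$, this collapses to $2\mathrm{Re}(\kappa\overline{\beta}\,\overline{\psi}) = s - |\beta|^2$, a constant. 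If $\kappa\overline{\beta}\neq 0$ this says $\mathrm{Re}(\overline{\psi})$, hence $\psi$, is constant on $\Omega$; if $\beta = 0$ then $s = 0$, contradicting $s\neq 0$. Either way $\psi \equiv c_2$ is constant, whence $\varphi \equiv \kappa c_2 + \beta =: c_1$, and \eqref{eq-lemlog} gives $|c_1|^2 = r|c_2|^2 + s$. The converse implication is immediate.

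The main obstacle I anticipate is the bookkeeping around zeros: justifying that $\varphi'/\psi'$ is a genuine analytic function on all of $\Omega$ (not just locally) requires knowing $\Omega$ is connected and that the two derivatives vanish to exactly the same order, which follows from $|\varphi'|^2 = r|\psi'|^2$ with $r\neq 0$ but should be stated carefully. A secondary subtlety is the degenerate branch $\psi\equiv 0$ (or $\varphi'\equiv 0$), which must be handled separately as above so that the hypotheses $r,s\neq 0$ are actually used — indeed $s\neq 0$ is precisely what rules out the affine-but-nonconstant possibility.
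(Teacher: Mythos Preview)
Your argument is correct and takes a genuinely different route from the paper. The paper passes to a subdisk where $r|\psi|^2+s>0$, takes logarithms, and exploits that $\log|\varphi|^2$ is harmonic: computing $\Delta\log(r|\psi|^2+s)=4rs\,|\psi'|^2/(r|\psi|^2+s)^2$ forces $\psi'\equiv 0$ directly, since $rs\neq 0$. Your approach instead applies $\partial_z$ then $\partial_{\bar z}$ to the identity to obtain $|\varphi'|^2=r|\psi'|^2$, then argues that $\varphi'/\psi'$ has constant modulus and is therefore a constant $\kappa$, and finally substitutes $\varphi=\kappa\psi+\beta$ back into \eqref{eq-lemlog} to force $\psi$ constant via the open mapping theorem. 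Both proofs ultimately rest on ``analytic with constant modulus implies constant,'' but yours avoids logarithms and the auxiliary subdisk, at the cost of a slightly longer endgame (the $\kappa,\beta$ substitution). The paper's Laplacian computation is shorter once the setup is in place and makes the role of the hypothesis $s\neq 0$ visible in a single formula.

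One small point to tidy: your step ``$|h|^2\equiv r$, hence $h$ constant'' tacitly assumes $r>0$. If $r<0$, the relation $|\varphi'|^2=r|\psi'|^2$ already forces $\varphi'\equiv\psi'\equiv 0$, which you should state as a one-line separate case alongside the $\psi\equiv 0$ and $\varphi'\equiv 0$ branches you already flag. Also, in the substitution you wrote $2\mathrm{Re}(\kappa\overline{\beta}\,\overline{\psi})$ where the computation gives $2\mathrm{Re}(\kappa\overline{\beta}\,\psi)$; the conclusion is unaffected since either way an analytic function has constant real part after rotation.
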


\begin{pf}
Note that for (\ref{eq-lemlog}) to hold, it is necessary that $r|\psi(z)|^2+s \geq 0$ for all $z\in\Omega$. If $r|\psi|^2+s \equiv 0 $, then a direct application of the open mapping theorem for analytic functions gives us the desired result.
\par
Assume now that $r|\psi|^2+s$ is not identically zero. Then, there exists a disk $D= D(z_0, R)$ centered at some $z_0\in\Omega$ and radius $R>0$ such that $D\subset\Omega$ and $r|\psi(z)|^2+s>0$ for all $z\in D$. The forthcoming analysis will be done in this disk.
\par
We take logarithms in (\ref{eq-lemlog}) to get
\begin{equation}\label{eq-lemlog2}
\log |\varphi|^2=\log\left(r|\psi|^2+s\right)\,.
\end{equation}
Now, the function of the left-hand side in (\ref{eq-lemlog2}) is harmonic so that the one on the right-hand side must be harmonic as well. The Laplacian of this latter function equals
\[
\Delta \log\left(r|\psi|^2+s\right)=4rs \cdot \frac{|\psi'|^2}{(r|\psi|^2+s)^2}\,,
\]
so that $\psi'$ is identically zero in $D$ and hence there exists a non-zero constant $c_2 \in \C$ such that $\psi\equiv c_2$. Bearing in mind (\ref{eq-lemlog}), we get that $\varphi$ equals a constant $c_1$ too, with $|c_1|^2= r|c_2|^2 + s$. A direct application of the identity principle for analytic functions completes the proof.
\end{pf}
\par\smallskip
\subsection{The case of linear dependence}
We first treat the easier case when the harmonic mapping $F+\overline G$ is such that $F'$ and $G'$ are linearly dependent.
\begin{thm}\label{lin-dep}
Let $F_1+\overline{G_1}$ be a (sense preserving) harmonic map
in a simply connected domain $\Omega \subset {\mathbb C}$. Assume further
that $G_1'=\lambda F_1'$ where $\lambda \in {\mathbb C}$. If $F_2+\overline{G_2}$ is a harmonic
map in $\Omega$ whose Jacobian equals that of $F_1+\overline{G_1}$, that is,
\begin{equation}\label{ej}
|F_1'|^2- |G_1'|^2=|F_2'|^2- |G_2'|^2\,,
\end{equation}
then there exist constants $\a,\,\b \in {\mathbb C}$ such that $F_2'=\a F_1'$ and $G_2'=\b F_1'$, with
$|\a|^2-|\b|^2=1-|\lambda|^2>0$. Moreover, if $F_1+\overline{G_1}$ is univalent in $\Omega$,
then $F_2+\overline{G_2}$ is univalent in $\Omega$.
\end{thm}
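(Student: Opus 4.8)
The plan is to normalize everything by $F_1'$ and then invoke Lemma~\ref{lem-log}. First I would record what sense‑preservation gives: the Jacobian $J_1=|F_1'|^2-|G_1'|^2$ is positive throughout $\Omega$, and since $G_1'=\lambda F_1'$ this says $(1-|\lambda|^2)|F_1'|^2>0$; hence $|\lambda|<1$ and $F_1'$ is zero‑free in $\Omega$, so that $1/F_1'$, and therefore the quotients $p:=F_2'/F_1'$ and $q:=G_2'/F_1'$, are analytic in $\Omega$.

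Next I would divide the equal‑Jacobian hypothesis (\ref{ej}) by $|F_1'|^2>0$, turning it into
\[
|p|^2=|q|^2+(1-|\lambda|^2)
\]
in $\Omega$. This is precisely the hypothesis of Lemma~\ref{lem-log} with $\varphi=p$, $\psi=q$, $r=1$, $s=1-|\lambda|^2$, both nonzero; moreover $p\not\equiv 0$, since $p\equiv 0$ would force the right‑hand side of (\ref{ej}) to be $\le 0$ while its left‑hand side is positive. The lemma then produces constants $\alpha,\beta\in\C$ with $p\equiv\alpha$, $q\equiv\beta$ and $|\alpha|^2=|\beta|^2+(1-|\lambda|^2)$, i.e. $F_2'=\alpha F_1'$, $G_2'=\beta F_1'$ and $|\alpha|^2-|\beta|^2=1-|\lambda|^2>0$, as claimed.

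For the univalence assertion I would integrate on the domain $\Omega$ to obtain $F_2=\alpha F_1+a$, $G_2=\beta F_1+b$ and $G_1=\lambda F_1+d$ for constants $a,b,d\in\C$, so that, up to additive constants, $F_1+\overline{G_1}=L(F_1)$ and $F_2+\overline{G_2}=M(F_1)$, where $L(w)=w+\overline{\lambda}\,\overline{w}$ and $M(w)=\alpha w+\overline{\beta}\,\overline{w}$ are $\R$‑linear self‑maps of $\C$ with real Jacobian determinants $1-|\lambda|^2>0$ and $|\alpha|^2-|\beta|^2>0$, hence injective. Then if $(F_2+\overline{G_2})(z_1)=(F_2+\overline{G_2})(z_2)$ one gets $M(F_1(z_1))=M(F_1(z_2))$, so $F_1(z_1)=F_1(z_2)$, hence $L(F_1(z_1))=L(F_1(z_2))$, i.e. $(F_1+\overline{G_1})(z_1)=(F_1+\overline{G_1})(z_2)$, and univalence of $F_1+\overline{G_1}$ forces $z_1=z_2$.

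The computation is essentially bookkeeping once the reduction to Lemma~\ref{lem-log} is in place, and the only step that needs a line of care is that reduction: one must check that after dividing by $|F_1'|^2$ the additive term $1-|\lambda|^2$ is genuinely a nonzero real constant (this is exactly what sense‑preservation buys) and that $p\not\equiv 0$, so that the lemma applies verbatim. The univalence part is then just the elementary remark that an $\R$‑linear map of $\C$ with nonzero determinant is injective, together with the hypothesis on $F_1+\overline{G_1}$.
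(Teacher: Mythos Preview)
Your proof is correct and follows essentially the same route as the paper: normalize by $F_1'$, apply Lemma~\ref{lem-log} to the resulting identity $|F_2'/F_1'|^2=|G_2'/F_1'|^2+(1-|\lambda|^2)$, and handle univalence by the injectivity of the affine maps $w\mapsto w+\overline{\lambda w}$ and $w\mapsto \alpha w+\overline{\beta w}$. If anything, your write-up is slightly more careful than the paper's in that you explicitly verify the hypothesis $\varphi\not\equiv 0$ of Lemma~\ref{lem-log}, and your univalence argument avoids the intermediate step of asserting that $F_1$ itself is univalent by simply composing back through the invertible $\R$-linear map $L$.
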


\begin{pf} Let us first notice that $F_1'$ has no zeros in $\Omega$ since the Jacobian $J_1$ of $F_1+\overline{G_1}$ is strictly positive. Dividing by $|F_1'|^2$ in (\ref{ej}), we get
\[
1-|\lambda|^2=\Big| \frac{F_2'}{F_1'}\Big|^2- \Big| \frac{G_2'}{F_1'}\Big|^2\,.
\]
We can now apply Lemma \ref{lem-log} to deduce that there exist $\a,\,\b \in {\mathbb C}$ such that $F_2'=\a F_1'$ and $G_2'=\b F_1'$, with $1-|\lambda|^2=|\a|^2-|\b|^2$.  In particular, $|\a|^2-|\b|^2 > 0$ since $J_1 > 0$.

It remains to show that univalence is preserved. If $F_1+\overline{G_1}$ is univalent, then, due to the special form of $G_1$, we have that $F_1$ is also univalent. Assume
there exist $z, \, w \in \Omega$ with $F_2(z)+\overline{G_2(z)}=F_2(w)+\overline{G_2(w)}$.
Since $F_2+\overline{G_2}=\a F_1 + \overline{\b F_1}+\gamma$ for some constant $\gamma \in {\mathbb C}$,
the last equality yields $\a \,(F_1(z)-F_1(w))=-\,\overline{\b (F_1(z)-F_1(w))}$. Taking the modulus on both
sides, we get $F_1(z)-F_1(w)=0$, since $|\a| \neq |\b|$. The univalence of $F_1$ forces $z=w$, thus proving the claim.
\end{pf}

\subsection{Linearly independent case}\label{ssec-constantdilat}
We now investigate the generic setting.

\begin{thm}\label{thm-equalJacobian-non-constantdilat}
Let $F_1+\overline{G_1}$ and $F_2+\overline{G_2}$ be two harmonic mappings on a simply connected domain $\Omega$, with Jacobians $J_1=|F_1'|^2-|G_1'|^2$ and $J_2=|F_2'|^2-|G_2'|^2$, respectively. Assume that $F_1'$ and $G_1'$  are linearly independent. If $J_1=J_2>0$, then there exist two complex constants $\a$, $\b$ with $|\a|^2=1+|\b|^2$ and a real number $\xi$ such that
\begin{equation}\label{eq-thm-nonconstant}
\left(\begin{array}{c}
F_2'\\
G_2'
\end{array}\right)=
\left(\begin{array}{cc}
\a &  \b\\
\overline \b & \overline \a
\end{array}\right)
\left(\begin{array}{cc}
1 & 0\\
0 & e^{i\xi}
\end{array}\right)
\left(\begin{array}{c}
F_1'\\
G_1'
\end{array}\right)\,.
\end{equation}
\end{thm}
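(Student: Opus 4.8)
The plan is to exploit the equal-Jacobian relation $|F_1'|^2-|G_1'|^2=|F_2'|^2-|G_2'|^2$ together with the linear independence of $F_1'$ and $G_1'$ to pin down $F_2'$ and $G_2'$ as linear combinations of $F_1'$ and $G_1'$, and then to determine the coefficients. First I would rewrite the hypothesis as
\[
|F_2'|^2-|F_1'|^2 = |G_2'|^2-|G_1'|^2
\]
and look for a way to apply Lemma \ref{lem-log}. The natural idea, following \cite{Ch-D-O}, is to introduce the analytic function $h$ (or a quotient of the relevant derivatives) that arises from the fact that two harmonic maps with the same Jacobian are related by a linear map of the ``analytic parts'': more precisely, I would first show that $F_2'$ lies in the complex-linear span of $F_1'$ and $G_1'$, and likewise $G_2'$. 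To do this, note that since $J_1>0$ we have $F_1'\neq 0$ everywhere, and since $F_1',G_1'$ are linearly independent the dilatation $\omega_1=G_1'/F_1'$ is a non-constant analytic function. Dividing the Jacobian identity by $|F_1'|^2$ gives
\[
\Big|\frac{F_2'}{F_1'}\Big|^2 - \Big|\frac{G_2'}{F_1'}\Big|^2 = 1 - |\omega_1|^2 ,
\]
so setting $\Phi=F_2'/F_1'$, $\Psi=G_2'/F_1'$ we must have $|\Phi|^2-|\Psi|^2 = 1-|\omega_1|^2$, an identity between $|\cdot|^2$ of analytic functions and a harmonic (indeed of the special form $1-|\omega_1|^2$) function.

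The key step is then to analyze this identity by taking logarithms / Laplacians in the spirit of Lemma \ref{lem-log}, but now the ``$s$'' on the right is replaced by a non-constant term. I would argue as follows: the function $|\Phi|^2 - |\Psi|^2 + |\omega_1|^2 \equiv 1$ is constant, so applying $\partial\bar\partial = \tfrac14\Delta$ and using $\partial\bar\partial |f|^2 = |f'|^2$ for analytic $f$ gives
\[
|\Phi'|^2 - |\Psi'|^2 + |\omega_1'|^2 = 0 \quad\text{on the open set where these make sense,}
\]
i.e. $|\Psi'|^2 = |\Phi'|^2 + |\omega_1'|^2$. Differentiating once more (applying $\partial\bar\partial\log$ to handle the sum, or iterating the same trick on the rescaled identity) should force a rigidity: the chain of such identities can only close up if $\Phi'$, $\Psi'$, $\omega_1'$ satisfy a linear relation with constant coefficients, which after integration yields that $\Phi$ and $\Psi$ are \emph{affine} combinations of $\omega_1$, that is $F_2' = (\text{const})F_1' + (\text{const})G_1'$ and similarly for $G_2'$. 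Concretely I expect to obtain
\[
\begin{pmatrix} F_2'\\ G_2'\end{pmatrix} = M \begin{pmatrix} F_1'\\ G_1'\end{pmatrix}
\]
for a constant $2\times 2$ matrix $M$; substituting back into the Jacobian identity $|F_2'|^2-|G_2'|^2 = |F_1'|^2-|G_1'|^2$ for all $z$ (and again using linear independence of $F_1',G_1'$, which lets one separate the $|F_1'|^2$, $|G_1'|^2$, and $F_1'\overline{G_1'}$ terms) forces $M^* J M = J$ with $J=\mathrm{diag}(1,-1)$, i.e. $M$ is in the group $U(1,1)$. Finally, a standard factorization of a $U(1,1)$ matrix gives $M = \begin{pmatrix}\a & \b\\ \bar\b & \bar\a\end{pmatrix}\begin{pmatrix}1 & 0\\ 0 & e^{i\xi}\end{pmatrix}$ with $|\a|^2 - |\b|^2 = 1$, i.e. $|\a|^2 = 1+|\b|^2$, which is exactly (\ref{eq-thm-nonconstant}).

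The main obstacle I anticipate is the rigidity step: passing from the single scalar identity $|\Phi|^2-|\Psi|^2 = 1-|\omega_1|^2$ to the conclusion that $\Phi,\Psi$ are affine functions of $\omega_1$. Lemma \ref{lem-log} handles the case where the right-hand side is $r|\psi|^2+s$ with the \emph{same} $\psi$ on both sides and constant $r,s$; here I have three a priori unrelated analytic functions $\Phi,\Psi,\omega_1$, so I will need either to iterate the Laplacian argument carefully (tracking that each differentiation reduces the ``complexity'' and that non-degeneracy — guaranteed by linear independence of $F_1',G_1'$ — prevents premature collapse) or to set up the problem as one about harmonic maps with equal Jacobian and quote the structure theorem from \cite{Ch-D-O} directly. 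The bookkeeping to get the precise normal form of $M$ (rather than just ``$M\in U(1,1)$'') is routine linear algebra but must be done with care so that $\xi$ is real and the $(1,1)$ entry can be taken positive; I would do the $U(1,1)$ factorization last, after all the analysis is complete.
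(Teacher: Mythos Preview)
Your outline matches the paper's proof almost step for step. The ``main obstacle'' you flag---passing from $|\Phi|^2-|\Psi|^2=1-|\omega_1|^2$ to the conclusion that $\Phi,\Psi$ are affine in $\omega_1$---is resolved exactly by the move you already started: after one Laplacian you have $|\Psi'|^2=|\Phi'|^2+|\omega_1'|^2$, and the paper simply divides by $|\Phi'|^2$ on a disk where $\Phi'\neq 0$ (the case $\Phi$ constant being handled separately), yielding
\[
\Big|\frac{\Psi'}{\Phi'}\Big|^2=1+\Big|\frac{\omega_1'}{\Phi'}\Big|^2,
\]
which is precisely the form $|\varphi|^2=r|\psi|^2+s$ of Lemma~\ref{lem-log} with $r=s=1$; so no further iteration is needed, and both ratios are constant. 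Integrating and substituting back into the original identity then pins down the integration constants.

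Your proposed finish---read off the constant matrix $M$, observe that equal Jacobians plus linear independence of $F_1',G_1'$ force $M^*JM=J$ with $J=\mathrm{diag}(1,-1)$, and then factor an arbitrary $U(1,1)$ element as $\begin{pmatrix}\alpha&\beta\\\bar\beta&\bar\alpha\end{pmatrix}\begin{pmatrix}1&0\\0&e^{i\xi}\end{pmatrix}$---is a tidier packaging than the paper's explicit computation, which instead solves for all the constants by hand from the substituted identity. Both routes arrive at the same normal form; yours makes the group-theoretic structure visible, while the paper's bare-hands calculation avoids having to justify separately that $|F_1'|^2$, $|G_1'|^2$, $F_1'\overline{G_1'}$ are linearly independent (a consequence of $\omega_1$ being non-constant, but one you would need to state).
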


\begin{pf}
Since, by assumption, $|F_1'|>0$, we may denote $\varphi=F_2'/F_1'$. Also, the dilatation $\omega_1=G_1'/F_1'$ of $F_1+\overline{G_1}$ is not constant. Since $J_1=J_2$, we have
$$1-|\omega_1|^2=|\varphi|^2-|\psi|^2\,,$$
where $\psi=\omega_2\varphi=G_2'/F_1'$. We rewrite this as
\begin{equation}\label{eq-thm-non-constantdilat1}
1+|\psi|^2=|\omega_1|^2+|\varphi|^2\, .
\end{equation}
Note that if $\varphi$ equals a constant $k$ (necessarily different from zero), then the previous equation can be written as
\[
1+|\psi|^2=|\omega_1|^2+|k|^2\,.
\]
\par
If $|k|=1$, then $|F_2'|=|F_1'|$ and the above relation implies $|G_1'|=|G_2'|$. In view of the open mapping theorem we must have $F_2'=e^{is_1}F_1'$ and $G_2'=e^{is_2}G_1'$ for some real constants $s_1$ and $s_2$, and then (\ref{eq-thm-nonconstant}) holds with $\b=0,\, \a=e^{is_1}$, and $\xi= s_1+s_2$.
\par
If $|k|\neq 1$, then by Lemma~\ref{lem-log}  we have that $\omega_1$ is constant, which contradicts our assumption. Thus, we can assume that $\varphi$ is not constant, so that its derivative is not identically zero. Therefore, there exists a disk $D=D(z_0,R)$, centered at some $z_0\in\Omega$ and with radius $R>0$, contained in $\Omega$ such that $\varphi'(z)\neq 0$ for all $z\in D$. The forthcoming analysis is done in this disk.
\par
Taking the Laplacian of both sides of (\ref{eq-thm-non-constantdilat1}), we obtain
\[
|\psi'|^2=|\varphi'|^2+|\omega_1'|^2\,.
\]
Since $\varphi'\neq 0$, we get
\begin{equation}\label{eq-thm-non-constantdilat2}
1+\left|\frac{\omega_1'}{\varphi'}\right|^2=\left|\frac{\psi'}{\varphi'}\right|^2\,.
\end{equation}
By Lemma~\ref{lem-log} we see that both $\psi'/\varphi'$ and $\omega_1'/\varphi'$ are constant functions in $D$ and hence (by the identity principle) throughout $\Omega$. Let $m=\omega_1'/\varphi'$. Therefore (using also (\ref{eq-thm-non-constantdilat2})) we have
\[
\omega_1'=m \,\varphi'\quad {\rm and} \quad \psi'=e^{i\theta} \sqrt{1+|m|^2}\,\varphi^\prime\,,
\]
where $\theta$ is a real number. Thus, for certain complex constants $n$ and $p$,
\begin{equation}\label{eqn}
\omega_1=m \varphi+n \quad {\rm and}\quad \psi=e^{i\theta} \sqrt{1+|m|^2}\varphi+p\,.
\end{equation}
Using this information in (\ref{eq-thm-non-constantdilat1}), we get
\begin{eqnarray}
\nonumber 1+(1+|m|^2)|\varphi|^2 &+&|p|^2 + 2Re\left\{e^{i\theta}\overline p \sqrt{1+|m|^2}\varphi\right\}\\ \nonumber &=& |\varphi|^2+|m|^2|\varphi|^2+|n|^2+2Re\left\{m\overline n \varphi\right\}\,.
\end{eqnarray}
Hence
\[
2Re\left\{\left(e^{i\theta}\overline p \sqrt{1+|m|^2}-m\overline n\right)\varphi\right\}=|n|^2-|p|^2-1\,.
\]
As a consequence, we see that unless $e^{i\theta}\overline p \sqrt{1+|m|^2}-m\overline n=0$, the values of $\varphi$ lie on a line. This is not possible for non-constant $\varphi$. Therefore, we have
\begin{equation}\label{eq-realpartbetagamma}
e^{i\theta}\overline p \sqrt{1+|m|^2}=m\overline n
\end{equation}
and also
\begin{equation}\label{eq-modulusbetagamma}
|n|^2=1+|p|^2.
\end{equation}
Taking absolute values in (\ref{eq-realpartbetagamma}) gives $|p|^2(1+|m|^2)=|m|^2 |n|^2$, while from (\ref{eq-modulusbetagamma}) we have $|m|^2(1+|p|^2)=|m|^2 |n|^2$. Therefore
\[
p=e^{is}m
\]
for some real number $s$. Note that if $m=0$, then $\omega_1$ is constant, which contradicts our assumptions. Thus, $m\neq 0$ and from (\ref{eq-realpartbetagamma}) we get
\[
n=\frac{e^{-i\theta} e^{is} m\sqrt{1+|m|^2}}{\overline m}.
\]
Expressing $\omega_1$ and $\varphi$ in the first relation in (\ref{eqn}), we obtain
\[
\frac{G_1'}{F_1'}=m\,\frac{F_2'}{F_1'}+n\,.
\]
Hence
\[
G_1'=m\,F_2' + n\,F_1'
\]
and
\begin{equation}\label{eqn2}
F_2'=-\frac{n}{m}\,F_1' + \frac{1}{m}\,G_1' =- \frac{e^{-i\theta} e^{is} \sqrt{1+|m|^2}}{\overline m}\,F_1'
+ \frac{1}{m}\,G_1'\,.
\end{equation}
Making $\psi$, $\varphi$ and $p$ explicit in the second relation in (\ref{eqn}) gives
\[
\psi=\frac{G_2'}{F_1'}=e^{i\theta} \sqrt{1+|m|^2}\,\frac{F_2'}{F_1'}+e^{is}m\,.
\]
Multiplying this relation by $F_1'$ and using (\ref{eqn2}), we deduce
\begin{eqnarray*}
G_2' &=& e^{i\theta} \sqrt{1+|m|^2}\,F_2'+e^{is}m\, F_1'  \\
&=& e^{i\theta} \sqrt{1+|m|^2}\Big( - \displaystyle\frac{e^{-i\theta} e^{is} \sqrt{1+|m|^2}}{\overline m}\,F_1'+ \frac{1}{m}\,G_1'\Big)
+e^{is}m\, F_1'  \\
&=& -\displaystyle\frac{e^{is}}{\overline m}\,F_1' + \frac{e^{i\theta}  \sqrt{1+|m|^2}}{m}\,G_1'\,.
\end{eqnarray*}
Finally, setting $\xi=s+\pi$ and denoting
\[
\a:= \displaystyle\frac{ e^{i\xi} e^{-i\theta} \sqrt{1+|m|^2}}{\overline m}\,,
\quad \b:=\frac{e^{-i\xi}}{m}\,,
\]
leads to
\begin{eqnarray*}
F_2' &=&  \a\,F_1' + e^{i\xi}\, \b \,G_1'\,, \\
G_2' &=& \overline \b\,F_1' +  e^{i\xi}\,\overline \a \,G_1'\,,
\end{eqnarray*}
which is a re-expression of (\ref{eq-thm-nonconstant}).
\end{pf}
\par\smallskip
\begin{rk}
Assume that $F_1+\overline{G_1}$ and $F_2+\overline{G_2}$ are two sense preserving harmonic mappings in the simply connected domain $\Omega$, related by (\ref{eq-thm-nonconstant}). In the case when $F_1+\overline{G_1}$ is univalent and $\xi=0$, then $F_2+\overline{G_2}$ is obtained by composing with a sense preserving affine transformation. Since this affine transformation preserves univalence, $F_2+\overline{G_2}$ is univalent as well. This is not true if $\xi\neq 0$. For example, the so-called harmonic Koebe function introduced in \cite{CSS} (see also \cite[Sec. 5.3]{Dur-Harm}), defined by $K=f+\overline g$, where
\[
f(z)=\frac{z-\frac 12 z^2+\frac 16 z^3}{(1-z)^3}\quad{\rm and}\quad g(z)=\frac{\frac 12 z^2+\frac 16 z^3}{(1-z)^3}\,,\quad z\in\mathbb{D}\,,
\]
is univalent in the unit disk $\D$ but there exists $|\mu|=1$ such that $f+\overline{\mu g}$ is not univalent (see \cite[Thm.~7.1]{HM-Stable}).
\end{rk}

\section{The solutions}
In what follows we will set
\begin{equation}\label{eq-fF}
F'(t,z)=f(t,z)\quad {\rm and}\quad G'(t,z)=g(t,z)
\end{equation}
and use the notation $F_0+\overline{G_0}$ for the function $F(0, \cdot)+\overline{G(0, \cdot)}$, which is supposed to be univalent in the simply connected domain $\Omega_0$. Also, we write $f_0=f(0, \cdot)$ and $g_0= g(0, \cdot)$.

\subsection{Solutions in the linearly dependent case}
We start by finding the solutions $f\not\equiv 0$ and $g\not\equiv 0$ such that the governing equation (\ref{ge}) holds under the additional assumption that the initial harmonic (sense preserving) labelling map $F_0+\overline{G_0}$ is such that $F_0'$ and $G_0'$ are linearly dependent, or, in
other words, $F_0+\overline{G_0}$ has constant dilatation. Indeed, since the harmonic map $F_0+\overline{G_0}$ is sense preserving, the linear dependence translates into the fact that there exists a constant $\lambda\in\C$ with $|\lambda|<1$
such that $G_0'=\lambda F_0'$.
\begin{thm}\label{thm-1stcathegory}
Let $\Omega_0\subset\C$ be a simply connected domain.
If the initial harmonic (univalent, sense preserving) labelling map $F_0+\overline{G_0}$ satisfies $G_0'=\lambda F_0'$ for some $\lambda\in \C$, then
the particle motion (\ref{an}) of a fluid flow, defined by means of (\ref{eq-fF}), is given by
\begin{equation}\label{eq-formulas-1stcathegory0}
\left\{\begin{array}{l}
f(t,z)=\sqrt{1-|\lambda|^2+|\b(t)|^2}\, e^{i\int_0^t \frac{\nu_0+Im\{\b(s)\overline{\b_t(s)}\}}{1-|\lambda|^2+|\b(s)|^2}\, ds}\,F_0'(z)\,,\\
g(t,z)=\b(t)F_0'(z),
\end{array}\right.
\end{equation}
where $\b:[0,\infty)\to \C$ is a $C^1$ function with $\b(0)=\lambda$, and $\nu_0 \in \R$ is an arbitrary constant.
\end{thm}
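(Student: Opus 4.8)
The plan is to combine the structural result of Theorem~\ref{lin-dep} with the scalar governing equation (\ref{ge}) and the mass-conservation constraint (\ref{mclh}), then integrate in $t$. First I would note that for each fixed $t$, the harmonic labelling map $F(t,\cdot)+\overline{G(t,\cdot)}$ has the same Jacobian as the initial one $F_0+\overline{G_0}$, because (\ref{jt}) forces $J$ to be time-independent. Since $F_0'$ and $G_0'$ are linearly dependent ($G_0'=\lambda F_0'$), Theorem~\ref{lin-dep} applies with $F_1+\overline{G_1}=F_0+\overline{G_0}$ and $F_2+\overline{G_2}=F(t,\cdot)+\overline{G(t,\cdot)}$: there exist complex numbers $\a(t),\b(t)$ with $|\a(t)|^2-|\b(t)|^2=1-|\lambda|^2>0$ such that $f(t,z)=F'(t,z)=\a(t)F_0'(z)$ and $g(t,z)=G'(t,z)=\b(t)F_0'(z)$. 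Moreover, by the univalence statement in that theorem, $F(t,\cdot)+\overline{G(t,\cdot)}$ is univalent for every $t$, so this is genuinely a fluid flow. At $t=0$ we get $\a(0)=1$ and $\b(0)=\lambda$. The regularity $\b\in C^1$ (and similarly for $\a$) should follow from the smoothness in $t$ of the labelling map, since $\a(t)=f(t,z_0)/F_0'(z_0)$ for any fixed $z_0\in\Omega_0$ with $F_0'(z_0)\neq0$.

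Next I would substitute this Ansatz into the governing equation (\ref{ge}). We have $F_t'(t,z)=\a_t(t)F_0'(z)$ and $\overline{G_t'(t,z)}=\overline{\b_t(t)}\,\overline{F_0'(z)}$, so
\[
F_t'\,\overline{F'}-G'\,\overline{G_t'}=\bigl(\a_t(t)\overline{\a(t)}-\b(t)\overline{\b_t(t)}\bigr)\,|F_0'(z)|^2\,.
\]
Equation (\ref{ge}) says this equals $i\nu(z,\bar z)$ for some real-valued $\nu$; since $|F_0'|^2>0$ is a fixed positive function of $z$, this forces $\a_t\overline\a-\b\overline{\b_t}$ to be a purely imaginary constant multiple of that function, i.e. $\a_t(t)\overline{\a(t)}-\b(t)\overline{\b_t(t)}=i\,\nu_0$ for some real constant $\nu_0$. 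Separately, mass conservation in the form (\ref{mclh}) reads $\mathrm{Re}\,(\a_t\overline\a-\b\overline{\b_t})|F_0'|^2=0$, which is automatically consistent; in fact $\mathrm{Re}(\a_t\overline\a)=\tfrac12(|\a|^2)_t=\tfrac12(|\b|^2)_t=\mathrm{Re}(\b\overline{\b_t}+\text{something})$—more precisely the constraint $|\a|^2-|\b|^2=1-|\lambda|^2$ differentiates to $\mathrm{Re}(\a_t\overline\a)=\mathrm{Re}(\b_t\overline\b)$, so the real part of $\a_t\overline\a-\b\overline{\b_t}$ indeed vanishes and only the imaginary-part equation $\mathrm{Im}(\a_t\overline\a-\b\overline{\b_t})=\nu_0$ carries information. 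Writing $\a(t)=|\a(t)|e^{i\theta(t)}$ with $|\a(t)|=\sqrt{1-|\lambda|^2+|\b(t)|^2}$ (from the Jacobian constraint), one computes $\mathrm{Im}(\a_t\overline\a)=|\a|^2\theta_t$, hence
\[
|\a(t)|^2\,\theta_t(t)=\nu_0+\mathrm{Im}\bigl(\b(t)\overline{\b_t(t)}\bigr)\,,
\]
which integrates to give exactly the phase appearing in (\ref{eq-formulas-1stcathegory0}), with $\theta(0)=0$ since $\a(0)=1$. Thus $f(t,z)=\a(t)F_0'(z)=\sqrt{1-|\lambda|^2+|\b(t)|^2}\,e^{i\int_0^t(\cdots)\,ds}F_0'(z)$ and $g(t,z)=\b(t)F_0'(z)$, as claimed, with $\b$ an arbitrary $C^1$ function satisfying $\b(0)=\lambda$ and $\nu_0$ an arbitrary real constant.

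Conversely, I would check that any choice of $C^1$ function $\b$ with $\b(0)=\lambda$ and any $\nu_0\in\R$ produces, via these formulas, genuine solutions of (\ref{jt}) and (\ref{eul}): the Jacobian $|f|^2-|g|^2=(1-|\lambda|^2+|\b|^2-|\b|^2)|F_0'|^2=(1-|\lambda|^2)|F_0'|^2$ is time-independent, so (\ref{jt}) holds; and one verifies by direct differentiation that $f_t\overline f-g\overline{g_t}=i\nu_0|F_0'|^2$, which is $i$ times a real function of $z$ alone, so (\ref{ge})—equivalently (\ref{mclh}) and (\ref{euh})—holds. Univalence of $F(t,\cdot)+\overline{G(t,\cdot)}$ for all $t$ follows again from Theorem~\ref{lin-dep} (or directly, since $F(t,\cdot)+\overline{G(t,\cdot)}=\a(t)F_0+\overline{\b(t)G_0/\lambda}+\text{const}$ composes $F_0$—univalent because of constant dilatation—with a sense-preserving affine map, as $|\a(t)|>|\b(t)|$). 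The main obstacle is the bookkeeping in converting the single complex relation $\a_t\overline\a-\b\overline{\b_t}=i\nu_0$ into the real phase integral and confirming that no further constraints are hidden; but because the $z$-dependence factors out cleanly through $F_0'$ and the Jacobian constraint is exactly $|\a|^2-|\b|^2=1-|\lambda|^2$, this is genuinely just careful integration rather than a new idea.
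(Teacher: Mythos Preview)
Your proposal is correct and follows essentially the same route as the paper: apply Theorem~\ref{lin-dep} to write $f=\a(t)F_0'$, $g=\b(t)F_0'$, substitute into (\ref{ge}) to obtain $\a_t\overline\a-\b\,\overline{\b_t}=i\nu_0$ with $\nu_0$ a real constant, and integrate using the polar decomposition $\a=|\a|e^{i\theta}$ together with the Jacobian constraint $|\a|^2=1-|\lambda|^2+|\b|^2$. The only differences are cosmetic: you spell out the regularity of $\a,\b$, the converse verification, and the univalence argument more explicitly than the paper does, and your phrase ``purely imaginary constant multiple of that function'' is slightly garbled (you mean that since the left side depends only on $t$ and the right only on $z$, both sides equal a constant, which must be purely imaginary).
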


\begin{proof} Recall that we are using the notation $f_0=F_0'$. Since the Jacobian of the labelling map remains unchanged at all times $t$, we can apply Theorem~\ref{lin-dep} to deduce that
$f(t,z)=\a(t)f_0(z)$ and $g(t,z)=\b(t)\, f_0(z)$, where $|\a|^2-|\b|^2\equiv 1-|\lambda|^2>0$.  Using this in (\ref{ge}),  we obtain
\[
f_t \,\overline{f} - g\, \overline{g}_t = (\a_t\,\overline{\a} -\b\,  \overline{\b}_t) \,|f_0(z)|^2 = i\,\nu(z,\,\overline{z})\,.
\]
As $F_0+\overline{G_0}$ is a sense preserving mapping, we have $|f_0|>0$. Thus $\a_t\,\overline{\a} -\b\, \overline{\b}_t=i\,\nu_0$ for some constant $\nu_0 \in \R$,
and we are lead to the system
\begin{equation}\label{2e}
\left\{\begin{array}{l}
\a_t\,\overline{\a} -\b  \overline{\b}_t =i\,\nu_0\,,\\
|\a|^2=|\b|^2+c\,,\
\end{array}\right.
\end{equation}
where $c=1-|\lambda |^2>0$.
The second equation above ensures that $|\a|>0$, which allows us to write $\a(t)=R(t)\,e^{i\varphi(t)}$ for appropriate $C^1$-functions $R: [0,\infty) \to (0,\infty)$ and $\varphi: [0,\infty) \to \R$ (see, for instance, \cite[Thm. 2.24]{Kuhnel}).  The system (\ref{2e}) written in polar coordinates
becomes
\begin{equation}\label{22e}
\left\{\begin{array}{l}
R_tR + iR^2 \varphi_t - \b\, \overline{\b}_t=i\,\nu_0\,,\\
R=\sqrt{|\b|^2+c}\,.\\
\end{array}\right.
\end{equation}

The first equation in (\ref{22e}) in conjunction with the time-differentiation of the second equation in (\ref{22e}) yields $R^2\,\varphi_t=\nu_0 + Im\{\b\,\overline{\b}_t\}$. Thus (\ref{2e}) reduces to
$$\left\{\begin{array}{l}
\varphi(t)=\varphi(0) + \int_0^t \frac{\nu_0 + Im\{\b(s)\,\overline{\b}_t(s)\}}{|\b(s)|^2+c}\,ds\,,\\
R=\sqrt{|\b|^2+c}\,.\\
\end{array}\right.$$
The initial conditions $f(0,\cdot)=f_0, \, g(0,\cdot)=g_0$ now give $\b(0)=\lambda$ and $\varphi(0)=0$. Therefore
we obtain (\ref{eq-formulas-1stcathegory0}).
\end{proof}
\par
\begin{ex}  Kirchhoff's solution \cite{Kirchhoff} is the particular case of (\ref{eq-formulas-1stcathegory0}) in which
\[
F_0'= A e^{ikz}\,,\quad \b(t) \equiv \lambda \,,\quad {\rm and}\quad \nu_0=0\,,
\]
where $A$ and $k$ are non-zero real constants and $\lambda\in (0, 1)$. The condition on the univalence of $F_0$ requires that $\Omega_0$ does not contain points $z$ and $w$ with
\[
Im\{z\}=Im\{w\}\quad{\rm and }\quad Re\{z\}=Re\{w\}+\frac{2m\pi}{k}
\]
for some integer $m$.
\end{ex}

\subsection{The solutions in the linearly independent case}\label{ssec-secondclasssols}
Using again the notation $F_0+\overline{G_0}$ for the function $F(0, \cdot)+\overline{G(0, \cdot)}$ and keeping in mind that $F_0+\overline{G_0}$ is sense preserving univalent in the simply connected domain $\Omega_0$, we have that $|F_0'|^2-|G_0'|>0$ in $\Omega_0$. Also, as before, we set $f_0=f(0, \cdot)$ and $g_0= g(0, \cdot)$ (hence $|f_0|^2-|g_0|^2\neq 0$).
\par
Now, we will consider solutions $f\not\equiv 0$ and $g\not\equiv 0$ such that (\ref{ge}) holds and such that $F_0+\overline{G_0}$ has non-constant dilatation.
\par
\begin{thm}\label{thm-2ndcathegory}
Let $\Omega_0\subset\C$ be a simply connected domain. Assume that the initial harmonic (univalent, sense preserving) labelling map $F_0+\overline{G_0}$ is such that $F_0'$ and  $G_0'$ are linearly independent. The particle motion (\ref{an}) of a fluid flow, defined by means of (\ref{eq-fF}), is either described by
\begin{equation}\label{eq-formulas-2ndcathegory}
\left\{\begin{array}{l}
f(t,z) = \sqrt{1+|\b(t)|^2}\, e^{i\int_0^t \frac{\nu_0+Im\{\b_t(s)\overline{\b(s)}\}}{1+|\b(s)|^2}\, ds }\,F_0'(z) +  \b(t)\,G_0'(z)\,, \\
g(t,z)=  {\overline{\b(t)}}\,F_0'(z) + \sqrt{1+|\b(t)|^2}\, e^{-i\int_0^t \frac{\nu_0+Im\{\b_t(s)\overline{\b(s)}\}}{1+|\b(s)|^2}\, ds }\,G_0'(z)\,,
\end{array}\right.
\end{equation}
where $\b:[0,\infty)\to \C$ is a $C^1$ function and $\nu_0 \in \R$, or by
\begin{equation}\label{eq-formulas-3rdcathegory}
\left\{\begin{array}{l}
f(t,z) = e^{i\nu_0 t}\,F_0'(z)\,,\\
g(t,z)=  e^{i(\xi_0-\nu_0)t}\,G_0'(z)\,,
\end{array}\right.
\end{equation}
where $\nu_0$ is as above and $\xi_0\in\R\neq \{0\}$.
\par
Moreover, for the solutions (\ref{eq-formulas-2ndcathegory}), univalence at any time is ensured once it holds at time $t=0$. Univalence holds for the solutions (\ref{eq-formulas-3rdcathegory}) if and only if  $F_0+\overline{\lambda G_0}$ is univalent for all $\lambda$ with $|\lambda|=1$.
\end{thm}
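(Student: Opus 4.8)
### Proof proposal

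The plan is to reduce the time‑dependent problem to a pointwise‑in‑$t$ application of Theorem~\ref{thm-equalJacobian-non-constantdilat}, and then to solve the resulting ODE in exactly the same way as in the proof of Theorem~\ref{thm-1stcathegory}. First I would observe that, by \eqref{jt}, the Jacobian of $F(t,z)+\overline{G(t,z)}$ is independent of $t$, so for every fixed $t\ge 0$ the harmonic map $F(t,\cdot)+\overline{G(t,\cdot)}$ has the same Jacobian as $F_0+\overline{G_0}$, and the latter has linearly independent $F_0',G_0'$ by hypothesis. Hence Theorem~\ref{thm-equalJacobian-non-constantdilat} applies at each time and yields complex numbers $\a(t),\b(t)$ with $|\a(t)|^2=1+|\b(t)|^2$ and a real $\xi(t)$ such that
\[
f(t,z)=\a(t)\,F_0'(z)+e^{i\xi(t)}\b(t)\,G_0'(z),\qquad
g(t,z)=\overline{\b(t)}\,F_0'(z)+e^{i\xi(t)}\overline{\a(t)}\,G_0'(z).
\]
Because $f,g$ are $C^1$ in $t$ and $F_0',G_0'$ are linearly independent as functions on $\Omega_0$, the coefficients $\a(t)$, $e^{i\xi(t)}\b(t)$, $\overline{\b(t)}$, $e^{i\xi(t)}\overline{\a(t)}$ are $C^1$; I would use this to conclude that $\a,\b$ and $\xi$ (the latter chosen continuously, e.g.\ via \cite[Thm.~2.24]{Kuhnel} applied to $e^{i\xi(t)}$ once it is nonvanishing — which it is) can be taken $C^1$, with $\a(0)=1$, $\b(0)=0$, $\xi(0)=0$ forced by the initial conditions $f(0,\cdot)=F_0'$, $g(0,\cdot)=G_0'$.

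Next I would substitute the above expressions into the governing equation \eqref{ge}. Expanding $f_t\overline f-g\,\overline{g_t}$ and using linear independence of $F_0',G_0'$ (so that the coefficients of $|F_0'|^2$, $|G_0'|^2$, $F_0'\overline{G_0'}$ and $\overline{F_0'}G_0'$ must separately match those of a purely imaginary function) produces a small ODE system for $\a,\b,\xi$. The coefficient of $F_0'\overline{G_0'}$ (and its conjugate) will force an algebraic relation linking $\xi_t$ to $\b$; concretely I expect either $\b\equiv 0$, in which case one is in the case \eqref{eq-formulas-3rdcathegory} with $\a(t)=e^{i\nu_0 t}$ and $e^{i\xi(t)}\overline\a(t)=e^{i(\xi_0-\nu_0)t}$ for constants $\nu_0,\xi_0\in\R$, or $\b\not\equiv 0$, which forces $\xi\equiv 0$. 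In the latter case the remaining equations, together with $|\a|^2=1+|\b|^2$, are exactly of the form \eqref{2e} (with $c=1$) from the proof of Theorem~\ref{thm-1stcathegory}; writing $\a=R e^{i\varphi}$ and repeating that argument verbatim gives $R=\sqrt{1+|\b|^2}$ and $\varphi(t)=\int_0^t\frac{\nu_0+\mathrm{Im}\{\b_t(s)\overline{\b(s)}\}}{1+|\b(s)|^2}\,ds$, which is precisely \eqref{eq-formulas-2ndcathegory}. One must also check the converse, i.e.\ that the stated formulas genuinely solve \eqref{ge}; this is a direct computation. (A small point: in the $\b\equiv 0$ branch one should note that if additionally $\xi_0=0$ the solution degenerates into the $\b\equiv 0$ specialization of \eqref{eq-formulas-2ndcathegory}, so the case $\xi_0\ne 0$ is what genuinely produces a new family, as indicated by the hypothesis $\xi_0\in\R\smallsetminus\{0\}$.)

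For the univalence statements I would argue as follows. For \eqref{eq-formulas-2ndcathegory} with $\xi\equiv 0$, at each time $t$ the map $F(t,\cdot)+\overline{G(t,\cdot)}$ is obtained from $F_0+\overline{G_0}$ by the affine transformation $w\mapsto \a(t)e^{i\varphi(t)\text{-adjusted}}\dots$ — more precisely, integrating the relations $F'=\a F_0'+\b G_0'$, $G'=\overline\b F_0'+\overline\a G_0'$ and using $|\a|^2-|\b|^2=1$ shows $F(t,z)+\overline{G(t,z)}=\a(t)\,(F_0(z)+\overline{G_0(z)})+\dots$ up to composition with a sense‑preserving affine map of $\C$; this is the case $\xi=0$ of the Remark following Theorem~\ref{thm-equalJacobian-non-constantdilat}, so univalence is inherited from $t=0$. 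For \eqref{eq-formulas-3rdcathegory}, integrating gives $F(t,z)+\overline{G(t,z)}=e^{i\nu_0 t}F_0(z)+\overline{e^{i(\xi_0-\nu_0)t}G_0(z)}+\text{const}$; factoring out $e^{i\nu_0 t}$ (a rotation, which preserves univalence) reduces univalence at time $t$ to univalence of $F_0+\overline{\mu_t G_0}$ with $\mu_t=e^{-i\xi_0 t}$, and as $t$ ranges over $[0,\infty)$ (with $\xi_0\ne 0$) $\mu_t$ is dense in the unit circle, so — using that the set of $\mu$ on the circle for which $F_0+\overline{\mu G_0}$ is univalent is closed — univalence for all $t$ is equivalent to univalence of $F_0+\overline{\mu G_0}$ for every $|\mu|=1$.

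The main obstacle I anticipate is the bookkeeping in the regularity step — showing that $\a,\b,\xi$ may be chosen $C^1$ from the $C^1$ dependence of $f,g$ and the (only measurably, a priori) $t$‑dependence of the decomposition in Theorem~\ref{thm-equalJacobian-non-constantdilat} — together with cleanly separating the two branches $\b\equiv 0$ versus $\xi\equiv 0$ from the algebraic consequences of \eqref{ge}; once the system is in hand, the integration and the univalence arguments are routine.
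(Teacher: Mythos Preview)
Your outline is correct and follows essentially the same route as the paper: apply Theorem~\ref{thm-equalJacobian-non-constantdilat} at each fixed $t$, substitute into \eqref{ge} to obtain an ODE system for $(\a,\b,\xi)$, split into the branches $\xi\equiv 0$ versus $\xi_t\equiv\xi_0\neq 0$, and integrate exactly as in Theorem~\ref{thm-1stcathegory}. The only procedural difference is that the paper extracts the constants by successively applying $\partial_z$ and $\partial_{\bar z}$ to \eqref{ge} (rewritten via the dilatation $\omega=g_0/f_0$) rather than by your direct coefficient-matching, and it obtains the dichotomy by first showing that $\xi_t$ is constant; your shortcut is valid but tacitly relies on the linear independence of $|f_0|^2,\ |g_0|^2,\ f_0\overline{g_0},\ \overline{f_0}g_0$, which you should note follows from $\omega$ being non-constant.
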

\par
\begin{pf}
By Theorem~\ref{thm-equalJacobian-non-constantdilat}, we know that there exist $C^1$ functions $\a, \b :[0, \infty)\to \C$ with $|\a(t)|^2-|\b(t)|^2=1$ for all $t\geq 0$ and $\xi:[0,\infty)\to\R$ such that
\begin{equation}\label{eq-system}
\left\{\begin{array}{l}
f(t,z) = \a(t)\,f_0(z) + e^{i\xi(t)}\,  \b(t)\,g_0(z)\,, \\
g(t,z)=  {\overline{\b(t)}}\,f_0(z) +  e^{i\xi(t)}\, {\overline{\a(t)}}\,g_0(z)\,.
\end{array}\right.
\end{equation}
Note that since $f(0, \cdot)=f_0$, $g(0, \cdot)=g_0$, and the dilatation of $F_0+\overline{G_0}$ is non-constant, we have the initial conditions
\begin{equation}\label{eq-normalizations}
\a(0)=1\,,\quad \b(0)=0\,,\quad{\rm and}\quad \xi(0)=0\,.
\end{equation}
\par
A straightforward calculation shows that
\begin{eqnarray}\label{eq-function0}
 f_t \overline f &-&g\, \overline g_t= (\a_t\, \overline \a- \b_t\, \overline \b) (|f_0|^2-|g_0|^2) + 2iRe\{\xi_t \a\overline \b e^{-i\xi}f_0\overline{g_0}\}
\\ \nonumber &+& i\xi_t\, (|\a|^2+|\b|^2)\, |g_0|^2
\\
\nonumber &=& (\a_t\, \overline \a - \b_t\, \overline \b -i\xi_t|\a|^2) (|f_0|^2-|g_0|^2) + i \xi_t|\a f_0+\b e^{i\xi}g_0|^2\,.
\end{eqnarray}
Now, by (\ref{ge}), we know that $f_t\, \overline f  -g\, \overline g_t=i\nu(z,\overline z)$. Since the function $|f_0|^2-|g_0|^2$ only depends on $z$ and $\overline z$ and is always different from zero, we obtain that
\begin{eqnarray}\label{eq-beforedilat}
\nonumber (\a_t(t)\overline{\a(t)}&-&\b_t(t)\overline{\b(t)}-i\xi_t(t)|\a(t)|^2)  \\ &+& i \frac{\xi_t(t)|\a(t)f_0(z)+ \b(t)e^{i\xi(t)}g_0(z)|^2}{|f_0(z)|^2-|g_0(z)|^2}=i\widetilde\nu(z,\overline z)\,,
\end{eqnarray}
where $\widetilde \nu=\nu/((|f_0|^2-|g_0|)^2)$. Let us re-write the previous equation using that $|f_0|^2-|g_0|^2=|f_0|^2(1-|\omega|^2)$, where $\omega=g_0/f_0$ is the dilatation of $F_0+\overline{G_0}$. Note that since we are assuming that $\omega$ is not constant, we have that there exists an open disk $D\subset \Omega_0$ such that $\omega'(z)\neq 0$ for all $z\in \Omega_0$. From now on, we will assume that $z\in D$. Within these terms (\ref{eq-beforedilat}) becomes
\begin{eqnarray}\label{eq-afterdilat1}
\nonumber (\a_t(t)\overline{\a(t)}&-&\b_t(t)\overline{\b(t)}-i\xi_t(t)|\a(t)|^2)   \\ &+& i \frac{\xi_t(t)|\a(t)+\b(t)e^{i\xi(t)}\omega(z)|^2}{1-|\omega(z)|^2}=i\widetilde\nu(z,\overline z)\,.
\end{eqnarray}
Taking derivatives with respect to $z$ in (\ref{eq-afterdilat1}), we obtain that the function
\begin{eqnarray*}\label{eq-function1}
&i& \xi_t(t) \frac{\omega'(z)}{(1-|\omega(z)|^2)^2}\, \left(\b(t)\,e^{i\xi(t)} +\a(t)\overline{\omega(z)}\right)
\, \left(\overline{\a(t)}+\overline{\b(t)}e^{-i\xi(t)}\overline{\omega(z)}\right)\\
&=& i\xi_t(t) \left(\overline{\a(t)}\b(t)e^{i\xi(t)}+(|\a(t)|^2+|\b(t)|^2)\overline{\omega(z)}+\a(t)\overline{\b(t)}e^{-i\xi(t)}\overline{\omega(z)}^{\, 2}\right)\\
&\times& \frac{\omega'(z)}{(1-|\omega(z)|^2)^2}
\end{eqnarray*}
only depends on $z$ and on $\overline z$. Hence, so does
\[
i\xi_t(t) \left(\overline{\a(t)}\b(t)e^{i\xi(t)}+(|\a(t)|^2+|\b(t)|^2)\overline{\omega(z)}+\a(t)\overline{\b(t)}e^{-i\xi(t)}\overline{\omega(z)}^{\, 2}\right)\,,
\]
which has derivative with respect to $\overline z$ equal to
\[
i\xi_t(t)(|\a(t)|^2+|\b(t)|^2)\overline{\omega'(z)}+2 i\xi_t(t) \a(t)\overline{\b(t)}e^{-i\xi(t)}\overline{\omega(z)} \overline{\omega'(z)}\,.
\]
Since $\omega'$ is supposed to be non-zero, we have that
\begin{equation}\label{eq-function4}
i\xi_t(t)(|\a(t)|^2+|\b(t)|^2)+2 i\xi_t(t) \a(t)\overline{\b(t)}e^{-i\xi(t)}\overline{\omega(z)}=\gamma(z,\overline z)\,,
\end{equation}
where $\gamma$ is a function of $z$ and $\overline z$. This fact finally implies (again taking derivatives with respect to $\overline z$ in (\ref{eq-function4}) and dividing out by $\omega'$) the equation
\[
\xi_t \a\overline{\b}e^{-i\xi}=c_1
\]
for some constant $c_1$ and, in particular, by using this information in (\ref{eq-function4}) we get that $(|\a|^2+|\b|^2)\xi_t$ is constant as well. Moreover, by (\ref{eq-function0}), we see that $\a\overline \a_t-\b\overline \b_t$ is constant too (indeed, using that $|\a|^2-|\b|^2=1$, it is easy to check that $\a_t\,\overline \a-\b_t\,\overline \b$ is a purely imaginary complex number). In other words, we have seen that the following system must be satisfied by the functions $\a$, $\b$, and $\xi$ (here, $c_1$, $c_2$, and $c_3$ are certain constants with $c_3\in\R$):
\begin{equation}\label{eq-systemabxi}
\left\{\begin{array}{l}
\xi_t \a\overline{\b}e^{-i\xi}=c_1\,,\\
(|\a|^2+|\b|^2)\xi_t=c_2\,,\\
\a_t \overline \a-\b_t\overline \b=ic_3\,,\\
|\a|^2-|\b|^2=1\,.
\end{array}\right.
\end{equation}
Now, note that the second and fourth equations in (\ref{eq-systemabxi}) can be re-written as
\[
\left\{\begin{array}{l}
|\a|^2\xi_t+|\b|^2\xi_t=c_2\,,\\
|\a|^2\xi_t-|\b|^2\xi_t=\xi_t\,,
\end{array}\right.
\]
which gives
\begin{equation}\label{eq-systemxiconstant2}
\left\{\begin{array}{l}
2 |\a|^2\xi_t=c_2+\xi_t\,,\\
2 |\b|^2\xi_t=c_2-\xi_t\,.
\end{array}\right.
\end{equation}
On the other hand, using the first equation in (\ref{eq-systemabxi}), a direct consequence of (\ref{eq-systemxiconstant2}) is that
\begin{eqnarray*}
4|c_1|^2=4(\xi_t \a\overline{\b}e^{-i\xi})\overline{(\xi_t \a\overline{\b}e^{-i\xi})}=2|\a|^2\xi_t\cdot 2 |\b|^2\xi_t= c_2^2-\xi_t^2\,.
\end{eqnarray*}
This shows that $\xi_t^2$ (hence $\xi_t$) must be a constant function.
\par\smallskip
We distinguish between two types of solutions.
\par\smallskip
\textbf{Case 1.} $\xi_t\equiv 0$. Then $\xi$ must be constant and hence, by (\ref{eq-normalizations}), we see that $\xi\equiv 0$.  Moreover, in this case (\ref{eq-systemabxi}) becomes
\[
\left\{\begin{array}{l}
\a_t \overline \a-\b_t\overline \b=ic_3\,,\\
|\a|^2-|\b|^2=1\,.
\end{array}\right.
\]
Notice that the above system is a particular case of (\ref{2e}), with $\nu_0=c_3$, $c=1$,  and where $\b$ is replaced by $\bar \b$. Taking into account that
$\a(0)=0$, the previous approach shows that its solution is given by
$$\a(t)=\sqrt{|\b(t)|^2+1}\,\exp\Big(i\int_0^t \frac{c_3+ Im\{\b_t(s) \,\overline{\b(s)}\}}{1+|\b(s)|^2}\, ds\Big)\,,$$
and $\b:[0,\infty) \to \C$ is an arbitrary $C^1$-function with $\b(0)=0$.
These considerations prove the first part of the theorem. The univalence is obtained as in Remark 2.
\par\smallskip
\textbf{Case 2.} $\xi_t\equiv \xi_0,\ \xi_0\neq 0$. Note that in this case (using that, by (\ref{eq-normalizations}), $\xi(0)=0$),  we have $\xi(t)=\xi_0 t$.  By (\ref{eq-systemxiconstant2}) we see that both functions $|\a|$ and $|\b|$ are constants. A further application of (\ref{eq-normalizations}) gives that $\b\equiv 0$ and $\a(t)=e^{i\varphi(t)}$ for some $C^1$ function $\varphi:[0, \infty)\to \R$ with $\varphi(0)=0$. Moreover, by (\ref{eq-systemabxi}) we get $\varphi'\equiv c_3$\,, so that $\varphi(t)=c_3 t$. Therefore, according to  (\ref{eq-system}), we have that
 \[
\left(\begin{array}{c}
F'(t,z)\\
G'(t,z)
\end{array}\right)=
\left(\begin{array}{cc}
1 & 0 \\
0 & e^{i\xi_0 t}
\end{array}\right)
\left(\begin{array}{cc}
e^{ic_3 t} & 0\\
0 & e^{-ic_3 t}
\end{array}\right)
\left(\begin{array}{c}
F_0'(z)\\
G_0'(z)
\end{array}\right)\,,
\]
which is  (\ref{eq-formulas-3rdcathegory}) with $\nu_0=c_3$\,.
\par\smallskip
Concerning the issue of univalence for the flows  (\ref{eq-formulas-3rdcathegory}), note that
\[
\left(\begin{array}{c}
F(t,z)\\
G(t,z)
\end{array}\right)=
\left(\begin{array}{cc}
1 & 0 \\
0 & e^{i\xi_0 t}
\end{array}\right)
\left(\begin{array}{cc}
e^{i\nu_0 t} & 0\\
0 & e^{-i\nu_0 t}
\end{array}\right)
\left(\begin{array}{c}
F_0(z)\\
G_0(z)
\end{array}\right) + \left(\begin{array}{c}
\mu(t)\\
\nu(t)
\end{array}\right)\,,
\]
where $\mu$ and $\nu$ are real-valued $C^1$ functions in $[0, \infty)$. That is,
\[F(t,z)+\overline{G(t,z)}=e^{i\nu_0 t}\left(F_0(z)+e^{-i\xi_0 t}\overline{G_0(z)}\right)+v(t)\,,
\]
where $v(t)$ is the translation vector $(\mu(t), \overline{\nu(t)})$. The translation vector $v$, as well as the multiplication factor $e^{i\nu_0 t}$, play no role regarding univalence. Since $e^{-i\xi_0 t}$ covers the unit circle as $t\in [0,\infty)$, the claim is proved.
\end{pf}
\par
\begin{ex}  Gerstner's flow \cite{Gerstner} corresponds to the case of  (\ref{eq-formulas-3rdcathegory}) in which
\[
\nu_0=0\,,\quad \xi_0=\sqrt{k\, \mathfrak{g}}\,, \quad   F_0' = 1\,,\quad {\rm and}\quad  G_0'(z)=-e^{-ikz}\,,
\]
where $k>0$, $\mathfrak{g}$ is the gravitational constant of acceleration, and $z\in\Omega_0=\{z\in\C\colon Im\{z\}<0\}$.
\end{ex}
\par\smallskip

\section*{Acknowledgements}
This research was developed during the second author's research stay at the University
of Vienna funded by the ERC Advanced Grant ``Nonlinear studies of water
flows with vorticity''. She wants to use this opportunity to thank the faculty and staff members at the Faculty of Mathematics for their hospitality.


\end{document}